\let\csname equation*\endcsname\relax
\let\csname endequation*\endcsname\relax
\newtheorem{theorem}{Theorem}
\begin{document}

\title[]{Rigorous convergence condition for quantum annealing}

\author{Yusuke Kimura$^1$ and Hidetoshi Nishimori$^1$ $^2$ $^3$}

\address{$^1$ International Research Frontiers Initiative, Tokyo Institute of Technology, Shibaura, Minato-ku, Tokyo 108-0023, Japan}

\address{$^2$ Graduate School of Information Sciences, Tohoku University, Sendai 980-8579, Japan}

\address{$^3$ RIKEN, Interdisciplinary Theoretical and Mathematical Sciences (iTHEMS), Wako, Saitama 351-0198, Japan}

\begin{abstract}
We derive a generic bound on the rate of decrease of transverse field for quantum annealing to converge to the ground state of a generic Ising model when quantum annealing is formulated as an infinite-time process.  Our theorem is based on a rigorous upper bound on the excitation probability in the infinite-time limit and is a mathematically rigorous counterpart of a previously known result derived only from the leading-order term of the asymptotic expansion of adiabatic condition. Since our theorem gives a sufficient condition of convergence for a generic transverse-field Ising model, any specific problem may allow a better, faster, control of the coefficient.
\end{abstract}

\section{Introduction}

Quantum annealing is an active field of research for combinatorial optimization, sampling, and quantum simulation \cite{Kadowaki1998,Albash2018,Hauke2020}.  There nevertheless exist only a very limited number of studies on mathematical conditions for convergence to the correct solution, the ground state of an Ising model, for generic or specific problems.  One of such studies is the contribution by Morita and Nishimori \cite{Morita2007,Morita2008}, where a sufficient condition is discussed for the amplitude of transverse field to satisfy in the long-time limit for a generic problem. The result is a power-law (polynomial) decrease of the amplitude as a function of time, which is faster than the corresponding rate of temperature decrease in classical simulated annealing \cite{Geman1984}.  

The approach of Morita and Nishimori builds on an approximate version of the ``adiabatic theorem", which takes into account only the leading-order term of the asymptotic expansion of excitation probability and ignores higher order contributions.  The goal of the present paper is to fix this insufficiency and derive a mathematically rigorous condition for convergence in the infinite-time limit based on the rigorous adiabatic theorem by Jansen, Ruskai, and Seiler \cite{Jansen2007}.

We formulate the problem and present our result with its proof in the next section.  The final section concludes the paper with discussions.

\section{Convergence condition}

\subsection{Formulation}
\label{sec:formulation}

Let us consider the following time-dependent Hamiltonian,
\begin{equation}
\label{eq:total_Hamiltonian}
H(t)=H_{\rm Ising}+H_{\rm TF}(t),
\end{equation}
where $H_{\rm Ising}$ is an arbitrary Ising Hamiltonian with general many-body interactions,
\begin{equation}
\label{eq:Ising_Hamiltonian}
H_{\rm Ising} = -\sum_{i=1}^N J_i\sigma^z_i - \sum_{i,j} J_{ij}\sigma^z_i \sigma^z_j - \sum_{i,j,k} J_{ijk}\sigma^z_i \sigma^z_j \sigma^z_k - \cdots,
\end{equation}
and $H_{\rm TF}(t)$ stands for a transverse-field term with time-dependent coefficient,
\begin{equation}
\label{eq:TF}
H_{\rm TF}(t)=-\Gamma(t)\sum_{i=1}^N \sigma_i^x,
\end{equation}
where $N$ is the total number of spins and $\sigma_i^x$ and $\sigma_i^z$ denote the $x$ and $z$ components of the Pauli matrix at site $i$, respectively. The coefficients $\{J_i, J_{ij}, J_{ijk}\}$ in the Ising Hamiltonian are supposed to be scaled such that $H_{\rm Ising}$ is of $\mathcal{O}(N)$
\footnote{
This condition is not essential and can indeed be relaxed to an arbitrary polynomial of $N$ since leading order terms in the following discussion are of exponential order.
}.
Time $t$ is supposed to run from 0 to infinity. The coefficient $\Gamma (t)$ will decrease from its large initial value $\Gamma (0)$ to zero as $t\to \infty$. Our goal is to derive a sufficient condition for the function $\Gamma (t)$ to satisfy in order for the system to reach the ground state of the generic Ising model of equation (\ref{eq:Ising_Hamiltonian}) within a given precision in the infinite time limit $t\to\infty$. Notice that we do not discuss computational complexity of quantum annealing for the generic problem (\ref{eq:total_Hamiltonian}) with (\ref{eq:Ising_Hamiltonian}), i.e., the amount of (finite but large) time to reach the solution as a function of the problem size $N$, which is already known to be NP hard \cite{Barahona1982}.

\subsection{Adiabatic theorem}
\label{sec:adiabatic_theorem}

Our theory relies on the adiabatic theorem proved by Jansen, Ruskai, and Seiler \cite{Jansen2007}.  Suppose that the Hamiltonian $H(t)$ depends on $t$ through a dimensionless time scaled by finite computation time $\tau$, $s=t/\tau$,
\begin{align}
    \tilde{H}(s)\equiv H(t),
\end{align}
where $s$ runs from 0 to 1 and thus $t$ runs from 0 to $\tau$. The Schr\"odinger equation reads
\begin{align}
    i\frac{1}{\tau}\frac{\partial}{\partial s}\ket{\psi_{\tau}(s)}=\tilde{H}(s)\ket{\psi_{\tau}(s)}.
\end{align}
The reduced Planck constant $\hbar$ is set to 1 for simplicity. We assume that $\tilde{H}(s)$ is twice differentiable by $s$ and the instantaneous ground state of $\tilde{H}(s)$ is non-degenerate. This latter condition is automatically satisfied by the transverse-field Ising model according to the Perron-Frobenius theorem \cite{Pillai2005} except in the limit $t\to\infty$, i.e., the pure Ising model, which may have ground-state degeneracy. We assume that this final degeneracy does not exist, which does not severly limit the type of problems.

Jansen {\em et al.} \cite{Jansen2007} proved the following inequality,
\begin{equation}
\label{eq:adiabatic_s}
\lVert P_\tau(s)-P(s)\rVert\le  \frac{\left\lVert\frac{d\tilde{H}(0)}{ds}\right\rVert}{\tau \Delta(0)^2}+\frac{\left\lVert\frac{d\tilde{H}(s)}{ds}\right\rVert}{\tau \Delta(s)^2} 
 +\frac{1}{\tau} \int_0^s d\tilde{s} \left(\frac{\left\lVert\frac{d^2\tilde{H}(\tilde{s})}{d\tilde{s}^2}\right\rVert}{\Delta(\tilde{s})^2} + \frac{7\, \left\lVert\frac{d\tilde{H}(\tilde{s})}{d\tilde{s}}\right\rVert^2}{\Delta(\tilde{s})^3}\right),
\end{equation}
where $\lVert \cdots \rVert$ denotes the operator norm.
Here $\Delta (s)$ stands for the instantaneous energy gap between the ground state and the first excite state, $\Delta (s)=\epsilon_1(s)-\epsilon_0(s)$, where $\epsilon_0(s)$ and $\epsilon_1(s)$ are instantaneous ground-state and first-excited-state energies of $\tilde{H}(s)$,
\begin{align}
    \tilde{H}(s)\ket{j(s)}=\epsilon_{j}(s)\ket{j(s)}~(j=0, 1, \cdots).
\end{align}
On the left-hand side of equation (\ref{eq:adiabatic_s}) appear projectors onto the current running state $\ket{\psi_{\tau}(s)}$ and the instantaneous ground state $\ket{0(s)}$, respectively,
\begin{align}
    P_{\tau}(s)=\ket{\psi_{\tau}(s)}\bra{\psi_{\tau}(s)},~
    P(s)=\ket{0(s)}\bra{0(s)}.
\end{align}
One can verify that the left-hand side of equation (\ref{eq:adiabatic_s}) bounds the square root of the probability of excitation,
\begin{align}
\label{eq:bound op sq}
    \lVert P_\tau(s)-P(s)\rVert\ge \sqrt{\sum_{j=1}^\infty |c_j(s)|^2},
\end{align}
where $c_j(s)$ is the coefficient of expansion of $\ket{\psi_{\tau}(s)}$ in terms of $\ket{j(s)}$,
\begin{align}
    \ket{\psi_{\tau}(s)}=\sum_{j=0}^{\infty} c_j(s)\ket{j(s)}.
\end{align}
This can be seen as follows: 
\begin{equation}
\begin{split}
    &\left(P_\tau(s)-P(s)\right)^2 =\left(\ket{\psi_{\tau}(s)}\bra{\psi_{\tau}(s)} - \ket{0(s)}\bra{0(s)}\right)^2 \\
    &= \ket{\psi_{\tau}(s)}\bra{\psi_{\tau}(s)} + \ket{0(s)}\bra{0(s)} -c_0(s)\ket{0(s)}\bra{\psi_{\tau}(s)}- \overline{c_0(s)}\ket{\psi_{\tau}(s)}\bra{0(s)}.
    \end{split}
\end{equation}
When this is acted on $\ket{0(s)}$, we have 
\begin{equation}
\begin{split}
    &\left(\ket{\psi_{\tau}(s)}\bra{\psi_{\tau}(s)} + \ket{0(s)}\bra{0(s)} -c_0(s)\ket{0(s)}\bra{\psi_{\tau}(s)}- \overline{c_0(s)}\ket{\psi_{\tau}(s)}\bra{0(s)} \right)\ket{0(s)}\\
    &=\overline{c_0(s)}\ket{\psi_{\tau}(s)}+\ket{0(s)}-|c_0(s)|^2\ket{0(s)}-\overline{c_0(s)}\ket{\psi_{\tau}(s)}\\
    &=\left(1-|c_0(s)|^2\right)\ket{0(s)}=\sum_{j=1}^\infty |c_j(s)|^2\ket{0(s)}.
\end{split}    
\end{equation}
This shows the bound \eqref{eq:bound op sq}.
Thus it is required to keep $\lVert P_\tau(s)-P(s)\rVert$ small if we demand that the system stays close to the instantaneous ground state at any $s$.

It is straightforward to rewrite equation (\ref{eq:adiabatic_s}) in terms of $t$, in place of $s$, as
\begin{equation}
\label{eq:adiabatic_t}
\lVert P_\tau(t)-P(t)\rVert\le  \frac{\left\lVert\frac{dH(0)}{dt}\right\rVert}{\Delta(0)^2}+\frac{\left\lVert\frac{dH(t)}{dt}\right\rVert}{\Delta(t)^2} 
 + \int_0^t d\tilde{t} \left(\frac{\left\lVert\frac{d^2 H(\tilde{t})}{d\tilde{t}^2}\right\rVert}{\Delta(\tilde{t})^2} + \frac{7\, \left\lVert\frac{dH(\tilde{t})}{d\tilde{t}}\right\rVert^2}{\Delta(\tilde{t})^3}\right).
\end{equation}
One can verify that the derivation process of equation (\ref{eq:adiabatic_s}) in reference \cite{Jansen2007} remains valid if we replace $s$ by $t(=s\tau)$ to reach equation \eqref{eq:adiabatic_t} without using $\tau$.

A commonly-used form of adiabatic condition 
\begin{align}
    \tau \gg \frac{\lvert \bra{1(s)}\frac{d\tilde{H}}{ds}\ket{0(s)}\rvert}{\Delta (s)^2}
\end{align}
corresponds to keeping small the value of only the second term of the right-hand side of equation (\ref{eq:adiabatic_s}).
We use full equation (\ref{eq:adiabatic_t}), one of rigorous versions of adiabatic theorem \cite{Albash2018}, to derive a sufficient condition for convergence of quantum annealing in the limit $t\to\infty$.

We are interested in suppressing the final probability of excitation, which is evaluated by taking the limit $t\to\infty$ on both sides of equation (\ref{eq:adiabatic_t}). 
\begin{align}
\label{eq:adiabatic_infinite}
    P_{\rm excited}\le  \frac{\left\lVert\frac{dH(0)}{dt}\right\rVert}{\Delta(0)^2}+\lim_{t \to \infty}\frac{\left\lVert\frac{dH(t)}{dt}\right\rVert}{\Delta(t)^2} 
 + \int_0^{\infty} dt \left(\frac{\left\lVert\frac{d^2 H(t)}{dt^2}\right\rVert}{\Delta(t)^2} + \frac{7\, \left\lVert\frac{dH(t)}{dt}\right\rVert^2}{\Delta(t)^3}\right),
\end{align}
where the left-hand side  
\begin{align}
    P_{\rm excited}=\lim_{t\to\infty}\lVert P_\tau(t)-P(t)\rVert
\end{align}
bounds the square root of the final excitation probability.
This is a weaker condition than requiring adiabaticity in the whole range of annealing process by imposing the adiabatic condition (\ref{eq:adiabatic_t}) for all $0<t<\infty$. 

\subsection{Evaluation of integrand}
\label{sec:evaluation_integrand}

We now aim to find a condition on the coefficient $\Gamma(t)$ to make the bound on the excitation probability $P_{\rm excited}$ on the right-hand side of the equation \eqref{eq:adiabatic_infinite} arbitrarily small.

To achieve this goal, first we evaluate the derivatives of the Hamiltonian. Since $H(t)$ depends on $t$ only through $\Gamma (t)$, we can easily obtain the following bounds,
\begin{equation}
\label{eq:first_deriv}
\left\lVert \frac{dH(t)}{dt}\right\rVert=\lvert\Gamma'(t)\rvert \left\lVert\sum_{i=1}^N \sigma_i^x\right\rVert= N\, |\Gamma'(t)|.
\end{equation}
Similarly,
\begin{equation}
\label{eq:second_deriv}
\left\lVert\frac{d^2 H(t)}{dt^2}\right\rVert= N\, \lvert\Gamma''(t)\rvert.
\end{equation}

The non-trivial part is to estimate a lower bound on the energy gap $\Delta (t)$, but this problem has already been solved for generic $H(t)$ in references \cite{Somma2007,Morita2007,Morita2008} as
\begin{align}
    \Delta (t)\ge A\,\Gamma (t)^N,
    \label{eq:bound_Delta}
\end{align}
where $A$ is independent of $t$ but depends on $N$ asymptotically $(N\gg 1)$ as
\begin{align}
    A= a \sqrt{N}e^{-bN} \label{eq:A}
\end{align}
with $N$-independent positive constants $a$ and $b$.
We find that replacement of denominators and numerators of two terms in the integrand of equation \eqref{eq:adiabatic_infinite} by equations \eqref{eq:first_deriv}, \eqref{eq:second_deriv},  \eqref{eq:bound_Delta}, and \eqref{eq:A} leads to their asymptotic (in $N$) upper bounds
as
\begin{align}
    \frac{\left\lVert\frac{d^2 H(t)}{dt^2}\right\rVert}{\Delta(t)^2} &\le\frac{e^{2bN}\lvert \Gamma''(t)\rvert}{a^2 \Gamma (t)^{2N}}
    \label{eq:bound_integrand1},\\
   \frac{\left\lVert\frac{dH(t)}{dt}\right\rVert^2}{\Delta(t)^3}&\le\frac{\sqrt{N}\, e^{3bN} (\Gamma'(t))^2}{a^3 \Gamma (t)^{3N}} 
   \,
    \label{eq:bound_integrand2}
\end{align}
respectively. Those upper bounds are to decrease faster than $t^{-1}$ as $t$ tends to infinity for the integral to converge in equation \eqref{eq:adiabatic_infinite}.

\subsection{Condition on the coefficient}
\label{sec:condition_Gamma}

To understand what functional form is allowed for $\Gamma(t)$ under the above-derived condition, we express $\Gamma (t)$ as
\begin{align}
     \Gamma (t)= (\delta t+c)^{-g(t)}
     \label{eq:Gamma_g}
\end{align}
with a twice-differentiable function $g(t)$, which should be strictly positive $g(t)>0$ because $\Gamma(t)$ is expected to tend toward 0 as $t\to \infty$. $\delta$ denotes a small parameter, and $c$ is a positive nonzero constant of order $\mathcal{O}(N^0)$.
We prove the following theorem.
\begin{theorem}
\label{theorem}
Excitation probability in the infinite-time limit $P_{\rm excited}$ can be made arbitrarily small for a large but fixed system size $N$
if the function $g(t)$ in equation \eqref{eq:Gamma_g} satisfies the following conditions,
\begin{align}
    &0<g(t)\le L
    \label{eq:g-condition1},\\
    &
    \lvert g'(t)\rvert \le \frac{\delta\, c'}{(\delta t+c)^{1+l}},
    \label{eq:g-condition2}\\
    & \lvert g''(t)\rvert \le \frac{\delta^2\, c'' \,(\delta t+c)^{-1-(2N-1)/(3N-2)}}{\log (\delta t+c)},
    \label{eq:g-condition3}
\end{align}
with a positive constant $L$ (which may depend on $N$) that satisfies the strict inequality $L<\frac{1}{3N-2}$, and positive constants $l$, $c'$ and $c''$, and $\delta$ is chosen small enough, of the order of a small constant multiplied by $N^{-1/2}e^{-3bN}$.
\end{theorem}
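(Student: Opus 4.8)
The plan is to substitute the ansatz \eqref{eq:Gamma_g} directly into the four surviving contributions on the right-hand side of \eqref{eq:adiabatic_infinite}: the two boundary terms at $t=0$ and $t\to\infty$, and the two integrals controlled by \eqref{eq:bound_integrand1}--\eqref{eq:bound_integrand2}. Abbreviating $u=\delta t+c$, I would first record
\begin{equation}
\Gamma'(t)=-\Gamma(t)\left(g'(t)\log u+\frac{\delta\,g(t)}{u}\right),
\end{equation}
and, by one more differentiation, an expression for $\Gamma''(t)$ whose bracketed factor assembles $g''(t)\log u$, terms of order $g'(t)/u$ and $g(t)/u^{2}$, and the square of the factor already appearing in $\Gamma'$. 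Because $\Gamma(t)^{M}=u^{-Mg(t)}$, inserting the gap bound \eqref{eq:bound_Delta}--\eqref{eq:A} converts the two integrands into the growth factors $u^{(2N-1)g(t)}$ and $u^{(3N-2)g(t)}$ multiplied by these derivative brackets. The task thus reduces to showing that the resulting products are integrable in $t$ and that all prefactors can be sent to zero.

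Next I would feed the hypotheses \eqref{eq:g-condition1}--\eqref{eq:g-condition3} into these products and replace the $t$-dependent exponent by its constant bound using $g(t)\le L$ together with $u\ge1$ (taking $c\ge1$), so that $u^{(3N-2)g(t)}\le u^{(3N-2)L}$ and likewise for the other factor. Conditions \eqref{eq:g-condition1} and \eqref{eq:g-condition2} make $g'\log u+\delta g/u$ behave like $\delta/u$ for large $u$, the $g'\log u$ piece being smaller by the extra power $u^{-l}$; hence the second integrand is $\lesssim\delta^{2}u^{(3N-2)L-2}$, whose integral at infinity converges precisely when $(3N-2)L<1$, i.e.\ under the stated strict inequality $L<\tfrac{1}{3N-2}$. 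For the first integrand the slowest-decaying piece of $\Gamma''$ is the $g''\log u$ term, and condition \eqref{eq:g-condition3} is tailored so that $g''\log u\lesssim\delta^{2}u^{-1-(2N-1)/(3N-2)}$; multiplying by $u^{(2N-1)L}$ again yields an exponent strictly below $-1$ under the same threshold. Thus the single inequality $L<\tfrac{1}{3N-2}$ renders both integrals finite.

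The remaining step is to track the constants so that $P_{\rm excited}$ becomes arbitrarily small. After the change of variable $dt=du/\delta$, the square $(\Gamma')^{2}$ contributes $\delta^{2}$ while the Jacobian returns one $\delta^{-1}$, so the second integral carries the overall factor $\sqrt{N}\,e^{3bN}\delta/a^{3}$; this is the binding term, and it forces the scaling $\delta\sim(\text{small const})\,N^{-1/2}e^{-3bN}$ announced in the theorem, which cancels the exponentially large $\sqrt{N}\,e^{3bN}$ and leaves a prefactor equal to the tunable small constant times a convergent integral. The first integral instead carries $e^{2bN}\delta/a^{2}\sim N^{-1/2}e^{-bN}$ and is automatically negligible. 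The two boundary terms fall out the same way: at $t=0$ one has $|\Gamma'(0)|/\Gamma(0)^{2N}\sim\delta\,c^{(2N-1)g(0)}=\mathcal{O}(\delta)$, while the limit $t\to\infty$ gives $|\Gamma'|/\Gamma^{2N}\sim\delta\,u^{(2N-1)g-1}\to0$ because $(2N-1)g\le(2N-1)L<1$, so that term vanishes outright.

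The genuinely delicate part is not any individual estimate but the asymptotic term-by-term bookkeeping: in each of the four contributions one must identify which of several competing powers of $u$, some dressed with $\log u$ or $(\log u)^{2}$, dominates as $u\to\infty$, and then confirm that the three conditions on $g,g',g''$ have been calibrated so that every surviving exponent lands strictly below the integrability threshold at the common critical value $g=\tfrac{1}{3N-2}$. I expect the subtlest check to be that the logarithmic factors never promote a borderline power $u^{-1}$ to a non-integrable one, which is exactly why the strict inequality $L<\tfrac{1}{3N-2}$, rather than equality, is required, and why the numerator of \eqref{eq:g-condition3} carries a compensating $1/\log(\delta t+c)$.
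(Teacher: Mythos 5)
Your proposal is correct and follows essentially the same route as the paper: compute $\Gamma'$ and $\Gamma''$ explicitly, use conditions \eqref{eq:g-condition1}--\eqref{eq:g-condition3} to reduce every term to $\delta$ times a convergent power of $\delta t+c$ with exponent strictly below $-1$, and absorb the $\sqrt{N}e^{3bN}/a^{3}$ prefactor by the stated choice of $\delta$. The only deviation is your extra assumption $c\ge 1$ to get $u^{(3N-2)g(t)}\le u^{(3N-2)L}$ uniformly; the paper avoids this restriction by splitting the integral at $t=1/\delta$ and bounding the finite piece by $c^{(3N-2)g(t)-2}$ directly, but this does not affect the substance of the argument.
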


\begin{proof}
We demonstrate that each term on the right-hand side of equation \eqref{eq:adiabatic_infinite} can be made arbitrarily small, of the order of $\delta$ or smaller, under the conditions \eqref{eq:g-condition1}, \eqref{eq:g-condition2}, and \eqref{eq:g-condition3}. 

First, we evaluate the first two terms on the right-hand side of equation \eqref{eq:adiabatic_infinite}. From equations \eqref{eq:first_deriv}, \eqref{eq:bound_Delta}, and \eqref{eq:A}, we obtain the following bound:
\begin{equation}
\frac{\left\lVert\frac{d H(t)}{dt}\right\rVert}{\Delta(t)^2} \le\frac{e^{2bN}\lvert \Gamma'(t)\rvert}{a^2 \Gamma (t)^{2N}}.
\end{equation}
We have 
\begin{equation}
\label{eq:Gamma_prime}
\Gamma'(t)=\frac{d}{dt}(\delta t+c)^{-g(t)}=(\delta t+c)^{-g(t)} \left(-g'(t)\log (\delta t+c)-\frac{\delta \,g(t)}{\delta t+c} \right).
\end{equation}
Owing to the condition \eqref{eq:g-condition2}, 
\begin{equation}
\lvert -g'(t)\log (\delta t+c)\rvert\le \frac{\delta\, c'}{\delta t+c}\frac{\lvert\log (\delta t+c)\rvert}{(\delta t+c)^l}.
\end{equation}
Because $\lim_{t\to\infty}\frac{\log (\delta t+c)}{(\delta t+c)^l}=0$, $\frac{\lvert\log (\delta t+c)\rvert}{(\delta t+c)^l}$ is bounded from above by a finite number over the region $0\le t<\infty$. This means that the function $\frac{\lvert\log (\delta t+c)\rvert}{(\delta t+c)^l}$ has the finite maximum over the region $0\le t<\infty$. We denote this maximum by $m$, i.e. 
\begin{equation}
m:={\rm max}_{0\le t<\infty} \frac{\lvert\log (\delta t+c)\rvert}{(\delta t+c)^l}.
\end{equation}
From these and using the condition on $g(t)$ \eqref{eq:g-condition1}, we obtain the following bound on $\lvert \Gamma'(t)\rvert$:
\begin{eqnarray}
\label{eq:bound on Gamma}
\lvert \Gamma'(t)\rvert &&\le \delta (\delta t+c)^{-g(t)-1}(g(t)+mc') \\ \nonumber
&&\le \delta (\delta t+c)^{-g(t)-1}(L+mc').
\end{eqnarray}
Then, we obtain the relation 
\begin{equation}
\label{eq:first two terms}
\frac{\left\lVert\frac{d H(t)}{dt}\right\rVert}{\Delta(t)^2} \le\frac{e^{2bN}\lvert \Gamma'(t)\rvert}{a^2 \Gamma (t)^{2N}}\le \frac{e^{2bN}}{a^2}\delta (\delta t+c)^{(2N-1)g(t)-1}(L+mc').
\end{equation}
Owing to equation \eqref{eq:g-condition1}, $(2N-1)g(t)-1<0$; therefore, from the relation \eqref{eq:first two terms} we immediately conclude that $\lim_{t \to \infty}\frac{\left\lVert\frac{dH(t)}{dt}\right\rVert}{\Delta(t)^2}=0$.

We also obtain from the relation \eqref{eq:first two terms} the evaluation of the term $\frac{\left\lVert\frac{dH(0)}{dt}\right\rVert}{\Delta(0)^2}$ as
\begin{equation}
\label{eq:eval H0}
\frac{\left\lVert\frac{dH(0)}{dt}\right\rVert}{\Delta(0)^2}\le \frac{e^{2bN}}{a^2}\delta c^{(2N-1)g(0)-1}(L+mc').
\end{equation} 

Now we evaluate the second term in the integral in equation \eqref{eq:adiabatic_infinite}. Utilizing the bounds \eqref{eq:bound_integrand2} and \eqref{eq:bound on Gamma}, we obtain the following relation:
\begin{eqnarray}
\int_0^{\infty} dt \frac{\left\lVert\frac{dH(t)}{dt}\right\rVert^2}{\Delta(t)^3} &&\le \frac{\sqrt{N}\, e^{3bN}}{a^3}\int_0^{\infty} dt\frac{(\Gamma'(t))^2}{\Gamma (t)^{3N}}\\ \nonumber
&& \le \frac{\sqrt{N}\, e^{3bN}}{a^3}(L+mc')^2 \int_0^{\infty} dt \delta^2 (\delta t+c)^{(3N-2)g(t)-2}.
\end{eqnarray}
We split the integral into two parts, one for $\delta\, t<1$ and the other for $\delta\, t>1$ and show that each of them is of the order of $\delta$.
Since $0<g(t)\le L<\frac{1}{3N-2}$, 
\begin{equation}
-2<(3N-2)g(t)-2\le (3N-2)L-2<-1;
\end{equation}
therefore, we have $(\delta t+c)^{(3N-2)g(t)-2}\le (\delta t+c)^{(3N-2)L-2}$ for $1/\delta\le t<\infty$. Thus,
\begin{equation}
\begin{split}
\label{eq:comp_1}
&\int_0^{\infty} dt \delta^2 (\delta t+c)^{(3N-2)g(t)-2} \\
&=\int_0^{\frac{1}{\delta}} dt \delta^2 (\delta t+c)^{(3N-2)g(t)-2}+\int_{\frac{1}{\delta}}^{\infty} dt \delta^2 (\delta t+c)^{(3N-2)g(t)-2} \\
& \le \int_0^{\frac{1}{\delta}} dt \delta^2 c^{(3N-2)g(t)-2}+\int_{\frac{1}{\delta}}^{\infty} dt \delta^2 (\delta t+c)^{(3N-2)L-2}.
\end{split}
\end{equation}
In the last line in equation \eqref{eq:comp_1}, we used the fact that $(\delta t+c)^{(3N-2)g(t)-2}\le c^{(3N-2)g(t)-2}$ because $(3N-2)g(t)-2<0$ and $\delta t\ge 0$. For the first integral in the last line in equation \eqref{eq:comp_1}, utilizing a change of variable $u=\delta t$ with a function $\tilde{g}(u):=g(t)$, 
\begin{equation}
\label{eq:suppl_1}
\int_0^{\frac{1}{\delta}} dt \delta^2 c^{(3N-2)g(t)-2}=\delta \int_0^1 du\, c^{(3N-2)\tilde{g}(u)-2}< \delta\, {\rm max}\{c^{-1}, c^{-2}\}
\end{equation}
since $-2< (3N-2)g(u)-2 <-1$.
For the second integral in the last line in \eqref{eq:comp_1}, because $(3N-2)L-1<0$, we have
\begin{eqnarray}
\label{eq:suppl_2}
&&\int_{\frac{1}{\delta}}^{\infty} dt \delta^2 (\delta t+c)^{(3N-2)L-2} \\ \nonumber
&&=\frac{\delta}{(3N-2)L-1}\left[(\delta t+c)^{(3N-2)L-1} \right]_\frac{1}{\delta}^\infty =\frac{\delta}{1-(3N-2)L}(c+1)^{(3N-2)L-1}.
\end{eqnarray}
Thus, we learn that 
\begin{equation}
\label{eq:eval_bound_1}
\int_0^{\infty} dt \frac{\left\lVert\frac{dH(t)}{dt}\right\rVert^2}{\Delta(t)^3}< \delta\frac{\sqrt{N}\, e^{3bN}}{a^3}(L+mc')^2 \left(\frac{(c+1)^{(3N-2)L-1}}{1-(3N-2)L}+{\rm max}\{c^{-1}, c^{-2}\}\right).
\end{equation}

Finally, we evaluate the first term in the integral in the equation \eqref{eq:adiabatic_infinite}. From equation \eqref{eq:Gamma_prime},
\begin{eqnarray}
\Gamma''(t)&&=\frac{d^2}{dt^2}(\delta t+c)^{-g(t)}\\ \nonumber
&&=(\delta t+c)^{-g(t)} \left(\frac{\delta^2 g(t)}{(\delta t+c)^2}-g''(t)\log(\delta t+c)-\frac{2\delta g'(t)}{\delta t+c}\right)\\ \nonumber
&&+ (\delta t+c)^{-g(t)} \left(-g'(t)\log (\delta t+c)-\frac{\delta g(t)}{\delta t+c} \right)^2.
\end{eqnarray}
Thus, utilizing the conditions \eqref{eq:g-condition1}, \eqref{eq:g-condition2}, and \eqref{eq:g-condition3}, we obtain
\begin{eqnarray}
\lvert \Gamma''(t)\rvert &&\le \delta^2 (\delta t+c)^{-g(t)-2}\left(g(t)+\frac{2c'}{(\delta t+c)^l}\right)+c''\delta^2 (\delta t+c)^{-g(t)-1-\frac{2N-1}{3N-2}}\\ \nonumber
&&+(\delta t+c)^{-g(t)} \delta^2 \left(\frac{g(t)+mc'}{\delta t+c}\right)^2 \\ \nonumber
&&\le \delta^2 (\delta t+c)^{-g(t)-2}\left(L+\frac{2c'}{c^l}+(L+mc')^2\right)+c''\delta^2 (\delta t+c)^{-g(t)-1-\frac{2N-1}{3N-2}}.
\end{eqnarray}
This yields the bound as follows:
\begin{equation}
\frac{\lvert \Gamma''(t)\rvert}{\Gamma (t)^{2N}}\le \delta^2 (\delta t+c)^{(2N-1)g(t)-2}\left(L+\frac{2c'}{c^l}+(L+mc')^2\right)+c''\delta^2 (\delta t+c)^{(2N-1)g(t)-1-\frac{2N-1}{3N-2}}.
\end{equation}
From this and using \eqref{eq:bound_integrand1}, we learn that 
\begin{eqnarray}
\int_0^{\infty} dt \frac{\left\lVert\frac{d^2 H(t)}{dt^2}\right\rVert}{\Delta(t)^2}&&\le \frac{e^{2bN}}{a^2}\int_0^{\infty} dt\frac{\lvert \Gamma''(t)\rvert}{\Gamma (t)^{2N}}\\ \nonumber
&&\le \frac{e^{2bN}}{a^2} \left(L+\frac{2c'}{c^l}+(L+mc')^2\right) \int_0^{\infty} dt \delta^2 (\delta t+c)^{(2N-1)g(t)-2} \\ \nonumber
&&+ \frac{e^{2bN}}{a^2} c''\int_0^{\infty} dt \delta^2 (\delta t+c)^{(2N-1)g(t)-1-\frac{2N-1}{3N-2}}.
\end{eqnarray}
After a computation analogous to that we performed in equations \eqref{eq:comp_1}, \eqref{eq:suppl_1}, and \eqref{eq:suppl_2} replacing $3N-2$ with $2N-1$, we find
\begin{equation}
\int_0^{\infty} dt \delta^2 (\delta t+c)^{(2N-1)g(t)-2}< \delta\left(\frac{(c+1)^{(2N-1)L-1}}{1-(2N-1)L}+{\rm max}\{c^{\frac{2N-1}{3N-2}-2}, c^{-2}\}\right),
\end{equation}
and 
\begin{equation}
\int_0^{\infty} dt \delta^2 (\delta t+c)^{(2N-1)g(t)-1-\frac{2N-1}{3N-2}}< \delta\left(\frac{(c+1)^{(2N-1)L-\frac{2N-1}{3N-2}}}{\frac{2N-1}{3N-2}-(2N-1)L}+{\rm max}\{c^{-1}, c^{-1-\frac{2N-1}{3N-2}}\}\right).
\end{equation}
From these, we deduce the following bound:
\begin{eqnarray}
\label{eq:eval_bound_2}
&&\int_0^{\infty} dt \frac{\left\lVert\frac{d^2 H(t)}{dt^2}\right\rVert}{\Delta(t)^2}\le \frac{e^{2bN}}{a^2}\int_0^{\infty} dt\frac{\lvert \Gamma''(t)\rvert}{\Gamma (t)^{2N}}\\ \nonumber
&&< \delta\frac{e^{2bN}}{a^2} \left(L+\frac{2c'}{c^l}+(L+mc')^2\right) \left(\frac{(c+1)^{(2N-1)L-1}}{1-(2N-1)L}+{\rm max}\{c^{\frac{2N-1}{3N-2}-2}, c^{-2}\}\right) \\ \nonumber
&&+ \delta\frac{e^{2bN}}{a^2} c''\left(\frac{(c+1)^{(2N-1)L-\frac{2N-1}{3N-2}}}{\frac{2N-1}{3N-2}-(2N-1)L}+{\rm max}\{c^{-1}, c^{-1-\frac{2N-1}{3N-2}}\}\right).
\end{eqnarray}

Thus, from evaluation of bounds on terms \eqref{eq:eval H0}, \eqref{eq:eval_bound_1}, and \eqref{eq:eval_bound_2} together with that $\lim_{t \to \infty}\frac{\left\lVert\frac{dH(t)}{dt}\right\rVert}{\Delta(t)^2}=0$, we conclude that the right-hand side of equation \eqref{eq:adiabatic_infinite} can be made arbitrarily small for fixed $N$ by choosing a sufficiently small $\delta$ of the order of a small constant multiplied by $N^{-1/2}e^{-3bN}$ as one sees in equations \eqref{eq:eval H0}, \eqref{eq:eval_bound_1}, and \eqref{eq:eval_bound_2}. This shows that the excitation probability can be made arbitrarily small under the stated conditions.
\end{proof}
\noindent
{\em Example.}~ As a simple example, one may choose a constant function,
\begin{align}
    g(t)=\frac{1}{4N}.
\end{align}

\subsection{Bounded coefficient}
\label{sec:condition_s}

It is often the case that one considers the following form of the 
Schr\"odinger dynamics:
\begin{equation}
\label{standard_eq1}
i\frac{d\psi(t)}{dt}=\left(s(t) H_{\rm Ising}-\left(1-s(t) \right) \sum_i \sigma_i^x\right) \psi(t),
\end{equation}
with a monotonically increasing function $0\le s(t)\le 1$,
instead of equation \eqref{eq:total_Hamiltonian} with equation \eqref{eq:TF}. We show that it is possible to rewrite the theory developed in previous sections to this case if we choose $t$ to run from 0 to $\infty$
\footnote{
Notice that $t$ in the formulation of equation \eqref{standard_eq1} is often supposed to run within a finite interval $0\le t\le \tau$. Our theory does not apply to this case of finite-time development. 
}.

Equation \eqref{standard_eq1} can be rewritten as
\begin{equation}
\label{standard_eq2}
i\frac{1}{s(t)}\frac{d\psi(t)}{dt}=\left( H_{\rm Ising}-\frac{1-s(t)}{s(t)} \sum_i \sigma_i^x \right) \psi(t).
\end{equation}
Let us define 
\footnote{
Not to be confused with the running variable in the integral in equation \eqref{eq:adiabatic_t}.
}
\begin{equation}
\label{tilde t}
\tilde{t}:=\int_0^t s(t) dt,
\end{equation}
which is a monotonic function of $t$. Utilizing $\tilde{t}$, equation \eqref{standard_eq2} can be rewritten as 
\begin{equation}
\label{standard eq3}
i\frac{d\psi(t)}{d\tilde{t}}=\left(H_{\rm Ising}-\frac{1-s(t)}{s(t)} \sum_i \sigma_i^x\right) \psi(t).
\end{equation}
With a function, $\Gamma(\tilde{t}):=(1-s(t))/s(t)$, equation \eqref{standard eq3} becomes 
\begin{equation}
\label{standard eq4}
i\frac{d\psi(t)}{d\tilde{t}}=\left( H_{\rm Ising}-\Gamma(\tilde{t}) \sum_i\sigma_i^x \right) \psi(t).
\end{equation}
Application of the argument in previous sections to \eqref{standard eq4} shows that, for convergence as $\tilde{t} \to \infty$, it is sufficient that $\Gamma(\tilde{t})$ behaves as
\begin{equation}
\label{eq for s}
\Gamma(\tilde{t})=\frac{1-s(t)}{s(t)} \propto \tilde{t}^{-\tilde{g}(\tilde{t})},
\end{equation}
with $\tilde{g}(\tilde{t})$ satisfying equations \eqref{eq:g-condition1} to \eqref{eq:g-condition3}.
Solving \eqref{eq for s} for $s(t)$, one obtains
\begin{equation}
\label{s asymptotic}
s(t)=\frac{1}{1+\tilde{t}^{-\tilde{g}(\tilde{t})}} \approx 1-\tilde{t}^{-\tilde{g}(\tilde{t})}~ (\tilde{t}\gg 1).
\end{equation}

We remark that in some cases $\tilde{t}$ can be approximated by $t$ when $t$ is large. For example, if we choose $s(t)={\rm tanh}\, t$,
\begin{equation}
\tilde{t}  = \log \cosh t\approx t ~ (t\gg1).
\end{equation}

\section{Discussion}
\label{sec:conclusion}
We have studied a sufficient condition for quantum annealing to converge to the ground state of a generic Ising model in the infinite-time limit for a given finite system size.  This is a mathematically rigorous version of a previous result \cite{Morita2007,Morita2008}, in which an approximate adiabatic condition was used. The result shows that convergence is achieved if the coefficient of the transverse-field term decreases with a power law of time or slower. This is qualitatively similar to the previous result in references \cite{Morita2007,Morita2008} but is different in rigorous quantification.  In particular, constraints on derivatives of the coefficient did not exist before.

Our result is a sufficient condition for a generic Ising model:  For any problem represented by equations \eqref{eq:total_Hamiltonian}, \eqref{eq:Ising_Hamiltonian}, and \eqref{eq:TF}, the system will become close to the ground state of the Ising model in the infinite-time limit if the conditions in Theorem 1 are satisfied.  We are unable to predict what happens if the conditions are not satisfied. It may happen that a faster decrease of $\Gamma (t)$ than we have derived here  results in convergence to the ground state for a given specific problem, or it may also be the case for other examples that the system ends up in an excited state even if one spends an infinite amount of time when the conditions in Theorem 1 are not met.

Our conclusion is to be contrasted with the corresponding result for classical simulated annealing \cite{Geman1984}. In a classical problem, the external parameter, temperature, is to be decreased as an inverse-logarithmic function of time in the limit of large computation time, which is much slower than the power law in the present quantum case.  However, we should be careful not to conclude that quantum annealing is more efficient than simulated annealing since in both cases one is supposed to spend an infinitely long time to reach the solution.

We discussed the behavior of the coefficient of the transverse-field term in the infinite-time limit.  No constraint is imposed on the behavior of the coefficient in the intermediate time region. 

It is important to remember that our goal is not to discuss computational complexity of quantum annealing.  Indeed, we have discussed a very generic Ising model, which is known to be NP hard \cite{Barahona1982}.  

We hope that developments along the line of the present work will lead to further non-trivial results to lay a firm theoretical foundation of quantum annealing.

\ack
We thank Kazuya Kaneko for valuable comments.
This work is based on a project JPNP16007 commissioned by the New Energy and Industrial Technology Development Organization (NEDO).  
\vspace{5mm}

\end{document}